\newtheorem{prop}{Proposition}
\begin{document}

\title{Token Curated Registries - A Game Theoretic Approach}
\author{Aditya Asgaonkar\footnote{Visiting student from BITS-Pilani, Goa, India; can also be reached at f20150043@goa.bits-pilani.ac.in }, Bhaskar Krishnamachari\\
Autonomous Networks Research Group\\
 Center for Cyber-Physical Systems and the Internet of Things\\
 Viterbi School of Engineering, University of Southern California\\
 \{aa\_192, bkrishna\}@usc.edu}
 \date{September 5, 2018}
\maketitle
\begin{abstract}
Token curated registries (TCRs) have been proposed recently as an approach to create and maintain high quality lists of resources or recommendations in a decentralized manner. Applications range from maintaining registries of web domains for advertising purposes (e.g., adChain) or restaurants, consumer products, etc. The registry is maintained through a combination of candidate applications requiring a token deposit, challenges based on token staking and token-weighted votes with a redistribution of tokens occurring as a consequence of the vote. We present a simplified mathematical model of a TCR and its challenge and voting process analyze it from a game-theoretic perspective. We derive some insights into conditions with respect to the quality of a candidate under which challenges occur, and under which the outcome is reject or accept. We also show that there are conditions under which the outcome may not be entirely predictable in the sense that everyone voting for accept and everyone voting for reject could both be Nash Equilibria outcomes. For such conditions, we also explore when a particular strategy profile may be payoff dominant. We identify ways in which our modeling can be extended and also some implications of our model with respect to the composition of TCRs.
\end{abstract}

\section{Introduction}

Blockchain technologies promise the development of innovative  decentralized applications that do not require a centralized trusted party. One such class of decentralized applications that are emerging are Token Curated Registries (TCRs). TCRs are decentralized applications that run atop a blockchain framework with the goal of maintaining a dynamic list curated by a set of participants. Each TCR has a token associated with it, and the token holders of this TCR are its participants. The goal of the TCR is to operate so as to ensure that the curated list is of ``high quality". For example, when a TCR intended for curating top 100 cities with highest rainfall is created, the hope is that the curated list will contain the ``true" 100 cities with highest rainfall. The decentralized aspect of TCR aims to ensure that individual organizations cannot distort or censor valuable information in their own self interest.

TCRs originated out of the need to curate and maintain information about top websites for distribution of advertisement content, as this problem is especially susceptible to tampering and falsification of information by large actors to promote their own self-interests. The first concrete use case and implementation of TCRs is the adChain Registry, which is a solution to the aforementioned problem. A specification and minimum viable design of TCRs was proposed by Mike Goldin~\cite{goldinTCR1_0} in 2017, and the adChain application~\cite{adChain} deployed and maintained by MetaX has been live since April 2018.

A TCR is a smart contract that contains the logic for token assignment, voting processes, and list entry maintenance. It is designed with incentive schemes to ensure that participants are rewarded for working towards the goal of high quality list creation, and sufficiently deterred from misbehavior. While such incentive schemes can be produced through various combinations of voting processes and list maintenance, we will discuss a particular instance of an incentive scheme that is based on an updated proposal by Goldin~\cite{goldinTCR1_1}.

To our knowledge, there is no prior theoretical work analyzing TCRs. Given that TCRs have very recently been proposed, and the first production implementation is also in its very early stages, there is also very little empirical evidence of the efficacy of incentive schemes that TCRs employ. In this paper, we attempt to model and analyze TCR through game-theoretic tools.

At its core, the idea of a TCR is to incentivize self-interested voters to do the right thing (create a meaningful, high quality registry). Game theory provides tools to evaluate how a given decentralized mechanism works out in the presence of self-interested parties. In particular, the concept of Nash equilibrium is a powerful one, allowing us to characterize ``stable" and ``unstable" sets of actions.

We first build a mathematical model of the payoffs for various actions of participants in TCRs. In particular, we consider the case when a candidate item has been submitted to the registry and identify the payoffs associated with actions of a single potential challenger (to challenge or not) and each voter (to accept or reject the item). Using these payoffs, we first consider the situation assuming a challenge has occurred and identify what combinations of votes are Nash Equilibria. We then consider what a rational potential challenger would do.

We analyze both a special case of a TCR in which a candidate item is being considered after a challenge, by just two voters (where each voter's vote is decisive), as well as a more general case in which there are any number of voters greater than two. We derive the following theoretical results:
\begin{itemize}
\item For two voters, we show that there are two possible scenarios in general - one in which there is a single threshold condition on the candidate quality such that either the outcome is accept beyond the threshold or reject when the quality is below the threshold; and another in which there are two thresholds on the quality such that the outcome is either always reject below one threshold, always accept beyond the second threshold, and a set of cases in between these thresholds, where both accept and reject are stable outcomes.
\item For more than two voters, we show that there is a threshold beyond which there is no challenge and hence the outcome is always to accept, and below that threshold a challenge happens but there are two possible equilibria - everyone accepts or everyone rejects, so that again there is an equilibrium selection problem.
\item Our present analysis does not rule out the possibility that there may be cases where the threshold for accept satisfies a certain ``strict improvement" property, namely that the value of the registry with the accepted item is strictly better than the value of the registry without it. If this property should hold, there is an interesting, somewhat counter-intuitive, implication for the TCR --- the composition of the registry may depend on the particular order in which the candidate items are submitted to it.
\item For the cases where the game that is defined ends up with an ``equilibrium selection'' problem, we further consider a refinement of the Nash Equilibrium concept called the payoff dominant equilibrium, which is a Pareto-optimal Nash equilibrium. For both the $2$-voter and $n$-voter games, we determine a common additional threshold condition on candidate quality that separates ``all voters accept" and ``all voters reject" as the outcomes corresponding to payoff dominant equilibria.
\end{itemize}

\begin{figure}[h]
    \includegraphics[width=\textwidth]{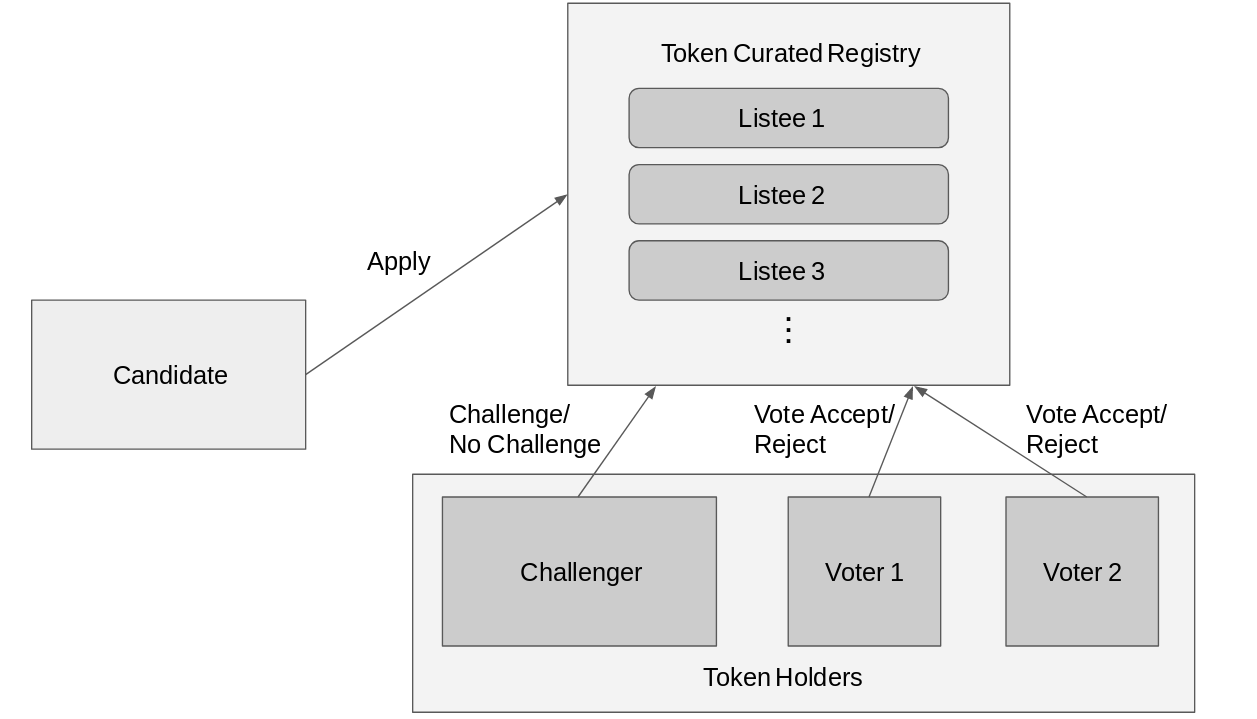}
    \caption{\label{fig:1c-2v-tcr}Illustration of a TCR}
\end{figure}

While it provides some useful insights, due to challenges associated with resolving equilibrium selection problems on a purely theoretical basis, some technical issues with determining a constructive sufficient condition for challenges, and simplifying assumptions made about the quality of a candidate being deterministic and perfectly known to all participants, this study also suggests many open problems. These problems need to be explored further not only from a theoretical perspective, but also through both carefully controlled empirical studies involving human players as well as evaluation of real-world deployments of TCRs.

The rest of the paper is organized as follows. In section~\ref{sec:model}, we formalize the basic model and definitions for the TCR. In section~\ref{sec:games}, we model the payoffs and incentives for various actions and define the voting game. In section~\ref{sec:twovoter}, we analyze the Nash Equilibria and outcomes for a special case of the game involving just two voters, and in section~\ref{sec:nvoter}, we extend our analysis to and analyze outcomes for a more general setting involving any number of $n>2$ voters. Because our analysis identifies settings in which there are multiple equilibria, we further investigate a solution concept for equilibrium selection called Payoff dominant equilibrium in section~\ref{sec:payoffdominance}. We present concluding comments in section~\ref{sec:conclusions}.

\section{Model and Definitions \label{sec:model}}

We follow the TCR 1.1 Model~\cite{goldinTCR1_1} in our formulation. The token curated registry is assumed to be collectively maintained by a set of participants. We focus our attention on the case where there is a single new candidate proposed for the registry who must deposit some tokens, a challenger who is potentially interested in denying entry to the candidate, and a set of voters. If the challenge succeeds, the challenger and those voting to reject the candidate will gain some tokens, while those voting to accept would lose some tokens. If the challenge fails, then the candidate and voters that vote to accept gain some tokens while the challenger and those who voted to reject lose some tokens. If there is no challenge, then the candidate is accepted. It is assumed that the value of the tokens is monotonically increasing in the quality of the list, and hence participants of the TCR (that hold its tokens) have an incentive to increase the quality of the TCR.

Concretely, the participants of interest to us are:
\begin{itemize}
    \item \textbf{Candidate}: This the single actor who is applying to seek entry into the registry.
    \item \textbf{Challenger}: This is the single actor who wants to deny entry to a certain candidate after the application process, possibly because it believes that accepting the candidate will lower the quality of the registry. The act of challenging an application triggers a vote among the token holders to decide whether to permit entry to the candidate.
    \item \textbf{Voters}: These are the token holders of the TCR who vote to decide on permitting new entries in the registry. In our model, the challenger's tokens also get counted towards the reject vote.
\end{itemize}

We consider a Token Curated Registry with the following parameters:
\begin{itemize}
  \item \textbf{Minimum Deposit ($D$)}: This is the minimum deposit that a candidate applying to the TCR must stake. In our model, we assume that the challenger must put down the same amount to challenge the application.
  \item \textbf{Dispensation Percentage ($d$)}: The fraction of the deposit given to the winning challenger/candidate after a challenge.
  \item \textbf{Vote Quorum ($Q$)}: Percentage of total tokens out of total tokens revealed in favor of admitting/keeping a challenged candidate.
  \item \textbf{Minority Bloc Slash ($s$)}: The fraction of the tokens lost by the losing voting bloc, and given to the winning bloc after a challenge.
\end{itemize}

\section{Games, Payoffs and Incentives\label{sec:games}}

Consider the scenario of a candidate applying to the registry by putting down a deposit of $D$ tokens. We are interested in analyzing the game that arises when a token holder (the ``challenger'') challenges this application by putting down a deposit of $D$ tokens. We also subsequently consider whether the potential challenger would even challenge the candidate knowing how that game would play out.

\noindent We use the following notation:
\begin{itemize}
  \item \textbf{Total Voting Tokens ($T$)}
  \item \textbf{Tokens owned by the Challenger ($T_C$)}
  \item \textbf{Total Tokens in favor of Acceptance ($T_A$)}
  \item \textbf{Total Tokens in favor of Rejection ($T_R = T - T_A$)}
  \item \textbf{Ratio of Voting Tokens in favor of Acceptance ($W_A = T_A/T$)}
\end{itemize}

\noindent We also define these additional notations:
\begin{itemize}
  \item \textbf{True Rating of the Candidate ($r$)}
  \item \textbf{Initial Token Valuation ($V(0)$)}: This function describes the valuation of an individual token at the current state of the registry.
  \item \textbf{Token Valuation Function ($V(r)$)}: This function describes the valuation of an individual token if the candidate with given true rating is admitted to the registry. Intuitively, if the candidate was a good choice, then the value of tokens increases ($V(r) > V(0)$). If the candidate was a bad choice, then the value of the tokens decreases ($V(r) < V(0)$).
\end{itemize}

A key assumption of our work is that $V(0)$ and $V(r)$ are deterministic quantities perfectly known to all parties. In future, we plan to relax this assumption.

\noindent At the end of the voting process, the following outcomes are possible:
\begin{itemize}
  \item \textbf{Candidate Accepted}: If $W_A$ (the ratio of weight of the tokens voting in favor of admitting the candidate to the total voting tokens) is greater than (or equals) $Q$, then the candidate is accepted into the list. \\
  In this case, the candidate gets $D \cdot d$ tokens back. The voters who voted in favor of accepting the candidate get $D \cdot (1-d) + (T_R - T_C) \cdot s$ split according to token weight among voters in favor of acceptance. The challenger loses its deposit of $D$ tokens, and all voters in the minority voting bloc lose $s$ fraction of their tokens.
  \item \textbf{Candidate Rejected}: If $W_A$ (the ratio of weight of the tokens voting in favor of admitting the candidate to the total voting tokens) is lesser than $Q$, then the candidate's application is rejected. \\
  In this case, the challenger gets $D \cdot d$ tokens back. The voters who voted in favor of rejecting the candidate get $D \cdot (1-d) + T_A \cdot s$ split according to token weight among voters in favor of rejection. The candidate loses its deposit of $D$ tokens, and all voters in the minority voting bloc lose $s$ fraction of their tokens.
\end{itemize}

\subsection{Perspective of the Candidate}
We assume that the candidate owns exactly $D$ tokens. The candidate has only one option:
\begin{itemize}
  \item \textbf{Apply to Registry}: The candidate chooses to apply to the registry by depositing $D$ tokens.
\end{itemize}

\noindent The two possible outcomes for the candidate are:
\begin{itemize}
  \item \textbf{Candidate Accepted}: If $W_A \geq Q$, then the candidate gets accepted as a listee, and the candidate gains $D \cdot d$ tokens. Total payoff is
  $$
  [D + D \cdot d] \cdot V(r)
  $$
  \item \textbf{Candidate Rejected}: If $W_A < Q$, then the candidate's application is rejected, and the candidate loses $D$ tokens. Total payoff is
  $$
  -D \cdot V(0)
  $$
\end{itemize}

\subsection{Perspective of the Challenger}
We assume that the challenger owns $t ~ (\geq D)$ tokens. The challenger has only one option if should proceed to challenge (if it does not then the candidate is automatically accepted):
\begin{itemize}
  \item \textbf{Challenge}: The challenger chooses to challenge the candidate's application by depositing $D$ tokens.
\end{itemize}

\noindent The two possible outcomes for the challenger are:
\begin{itemize}
  \item \textbf{Candidate Accepted}: If $W_A \geq Q$, then the candidate gets accepted as a listee, and the challenger loses its deposit of $D$. Total payoff is
  $$
  (t - D) \cdot V(r)
  $$
  \item \textbf{Candidate Rejected}: If $W_A < Q$, then the candidate's application is rejected, and the challenger gets $D \cdot d$ tokens. Total payoff is
  $$
  [t + D \cdot d] \cdot V(0)
  $$
\end{itemize}

\subsubsection{Incentive to be a Challenger}
\noindent \textbf{A necessary condition}: a token holder may be incentivized to be a challenger if the maximum payoff from making a challenge is greater than the payoff from not challenging at all, i.e,
$$
max((t - D) \cdot V(r), (t + D \cdot d) \cdot V(0)) > t \cdot V(r)
$$

Two cases arise:
\begin{itemize}
  \item If winning the challenge has better payoff: $(t - D) \cdot V(r) < (t + D \cdot d) \cdot V(0)$. We need:
  \begin{align*}
    (t + D \cdot d) \cdot V(0) &> t \cdot V(r) \\
    \Longleftrightarrow \frac{V(r)}{V(0)} &< \frac{(t + D \cdot d)}{t} \\
    \Longleftrightarrow \frac{V(r)}{V(0)} &< 1 + \frac{D \cdot d}{t}
  \end{align*}

  We also have:
  \begin{align*}
    (t - D) \cdot V(r) &< (t + D \cdot d) \cdot V(0) \\
    \Longleftrightarrow \frac{V(r)}{V(0)} &< \frac{t + D \cdot d}{t - D} \\
    \Longleftrightarrow \frac{V(r)}{V(0)} &< \frac{t - D + D \cdot (1 + d)}{t - D} \\
    \Longleftrightarrow \frac{V(r)}{V(0)} &< 1 + \frac{D \cdot (1 + d)}{t - D} \\
  \end{align*}

  The condition becomes:
  \begin{align*}
  \frac{V(r)}{V(0)} &< min(1 + \frac{D \cdot (1 + d)}{t - D}, 1 + \frac{D \cdot d}{t}) \\
  \Longleftrightarrow \frac{V(r)}{V(0)} &< 1 + min(\frac{D \cdot (1 + d)}{t - D}, \frac{D \cdot d}{t})
  \end{align*}

  Comparing the two values:
  \begin{align*}
  &\frac{D \cdot (1 + d)}{t - D} > \frac{D \cdot d}{t} \\
  \Longleftrightarrow~& \frac{(1 + d)}{t - D} > \frac{d}{t} \\
  \Longleftrightarrow~& t \cdot (1 + d) > (t - D) \cdot d \\
  \Longleftrightarrow~& t + t \cdot d > t \cdot d - D \cdot d \\
  \Longleftrightarrow~& t > - D \cdot d
  \end{align*}

  Let us denote $\frac{D \cdot d}{t}$ by $\delta$.

  Therefore, we need:
  $$
  \frac{V(r)}{V(0)} < 1 + \delta
  $$

  which means that the challenger may be incentivized to challenge, unless $V(r)$ is \emph{sufficiently} better than $V(0)$.

  \item If losing the challenge has better payoff: $(t - D) \cdot V(r) \geq (t + D \cdot d) \cdot V(0)$.
  We have:
  $$
  t \cdot V(r) \geq (t - D) \cdot V(r) \geq (t + D \cdot d) \cdot V(0)
  $$
  which means that the challenger is better off not making a challenge (and candidate is accepted).
\end{itemize}

\noindent \textbf{Attempt to derive a sufficient condition}: a token holder would be incentivized to be a challenger if the minimum payoff from making a challenge is greater than the payoff from not challenging at all, i.e,
$$
min((t - D) \cdot V(r), (t + D \cdot d) \cdot V(0)) > t \cdot V(r)
$$

Two cases arise:
\begin{itemize}
  \item If winning the challenge (candidate rejected) has better payoff: $(t - D) \cdot V(r) < (t + D \cdot d) \cdot V(0)$. We need:
  \begin{align*}
    (t - D) \cdot V(r) &> t \cdot V(r) \\
    \Longleftrightarrow (t - D) &> t
  \end{align*}
  which is trivially false.

  \item If losing the challenge (candidate accepted) has better payoff: $(t - D) \cdot V(r) \geq (t + D \cdot d) \cdot V(0)$.
  We have:
  $$
  t \cdot V(r) \geq (t - D) \cdot V(r) \geq (t + D \cdot d) \cdot V(0)
  $$
  which means that the challenger is better off not making a challenge (and candidate is accepted).
\end{itemize}

This analysis does not lead us to a sufficient condition for the challenger to make a challenge. \\


To work around the above difficulties, we instead adopt the following axiomatic approach for a sufficient condition:  \\


\noindent\textbf{Axiom on Sufficient Condition for Challenge:}
If $\frac{V(r)}{V(0)} = x$ is a sufficient condition for challenger to challenge, then all $\frac{V(r)}{V(0)}$ such that $\frac{V(r)}{V(0)} \leq x$ is a sufficient condition for challenger to challenge. \\

\noindent\textbf{Assumption}: There is a lowest upper bound on $\frac{V(r)}{V(0)}$ such that a challenger will challenge. We denote this lowest upper bound as $1 + \delta'$.  \\

Given the necessary condition derived earlier, we know that $\delta' \leq \delta$.
Given the axiom, the above assumption leads to the following \textbf{inequality for  sufficient condition that the challenger always challenges}:
$$
	\frac{V(r)}{V(0)} < 1 + \delta'
$$

Before proceeding, we note that while we have found it necessary for our modeling effort, the above axiomatic approach is not completely satisfactory, as it doesn't specify the value of $\delta'$ or offer a constructive way to compute it numerically.

\subsection{Perspective of a Non-Challenging, Voting Token Holder}
A non-challenging, voting token holder can choose either of these two options:
\begin{itemize}
  \item \textbf{Accept}: The token holder may choose to vote in favor of accepting the candidate into the list. In this case, the token holder's tokens are counted towards the vote quorum $Q$.
  \item \textbf{Reject}: The token holder may choose to reject the candidate's application to be on the list.
\end{itemize}

Consider a non-challenging, voting token holder who holds $t$ tokens.
There are four possible outcomes for this token holder:
\begin{enumerate}
  \item \textbf{Vote in Favor of Acceptance}:
  \begin{enumerate}
    \item \textbf{Candidate Accepted}: The token holder gets $[D \cdot (1-d) + (T_R - T_C) \cdot s] \cdot \frac{t}{T_A}$ tokens. Total payoff is
    $$
    \{t + [D \cdot (1-d) + (T_R - T_C) \cdot s] \cdot \frac{t}{T_A}\} \cdot V(r)
    $$
    \item \textbf{Candidate Rejected}: The token holder loses $s$ fraction of its tokens. Total payoff is
    $$
    t \cdot (1 - s) \cdot V(0)
    $$
  \end{enumerate}
  \item \textbf{Vote in Favor of Rejection}:
  \begin{enumerate}
    \item \textbf{Candidate Accepted}: The token holder loses $s$ fraction of its tokens. Total payoff is
    $$
    t \cdot (1 - s) \cdot V(r)
    $$
    \item \textbf{Candidate Rejected}: The token holder gets $[D \cdot (1-d) + T_A \cdot s] \cdot \frac{t}{T_R - T_C}$ tokens. Total payoff is
    $$
    \{t + [D \cdot (1-d) + T_A \cdot s] \cdot \frac{t}{T_R - T_C}\} \cdot V(0)
    $$
  \end{enumerate}
\end{enumerate}

\subsubsection{Incentive to Vote}
A sufficient condition: a token holder is incentivized to vote if the minimum payoff from making a vote is greater than the payoff from not voting at all.

Four cases arise:
\begin{enumerate}
  \item \textbf{Vote for Accept}:
  \begin{enumerate}
    \item Losing vote (candidate rejected) has better payoff:
    $$
    t \cdot (1 - s) \cdot V(0) < \{t + [D \cdot (1-d) + (T_R - T_C) \cdot s] \cdot \frac{t}{T_A}\} \cdot V(r)
    $$
    We need:
    \begin{align*}
      t \cdot (1 - s) \cdot V(0) &> t \cdot V(0)\\
      \implies (1 - s) &> 1 \\
      \implies s &< 0
    \end{align*}
    \item Winning vote (candidate accepted) has better payoff:
    $$
    t \cdot (1 - s) \cdot V(0) \geq \{t + [D \cdot (1-d) + (T_R - T_C) \cdot s] \cdot \frac{t}{T_A}\} \cdot V(r)
    $$
  \end{enumerate}
  \item \textbf{Vote for Reject}:
  \begin{enumerate}
    \item Losing vote (candidate accepted) has better payoff:
    $$
    t \cdot (1 - s) \cdot V(r) < \{t + [D \cdot (1-d) + T_A \cdot s] \cdot \frac{t}{T_R - T_C}\} \cdot V(0)
    $$
    We need:
    \begin{align*}
      t \cdot (1 - s) \cdot V(r) &> t \cdot V(0)\\
      \Longleftrightarrow \frac{V(r)}{V(0)} &> \frac{1}{(1 - s)}
    \end{align*}
    \item Winning vote (candidate rejected) has better payoff:
    $$
    t \cdot (1 - s) \cdot V(r) \geq \{t + [D \cdot (1-d) + T_A \cdot s] \cdot \frac{t}{T_R - T_C}\} \cdot V(0)
    $$
  \end{enumerate}
\end{enumerate}

\section{Nash Equilibria and outcomes in a $1$-challenger, $2$-voter TCR system\label{sec:twovoter}}

We now shift our focus to a simple 1-challenger, 2-voter TCR system. We assume that all 3 participants have equal tokens $t$. We also assume that $Q = 0.5$ ,i.e., after the challenge has been made, candidate is rejected iff at least one of the voters votes for reject.

\begin{table}[h]
  \begin{adjustwidth}{-2cm}{}
  \setlength{\extrarowheight}{1.5em}
  \begin{tabular}{cc|c|c|}
    & \multicolumn{1}{c}{} & \multicolumn{2}{c}{Voter $1$}\\
    & \multicolumn{1}{c}{} & \multicolumn{1}{c}{Accept}  & \multicolumn{1}{c}{Reject} \\\cline{3-4}
    \multirow{2}*{Voter $2$}  & Accept &
    $[t + \frac{D \cdot (1 - d)}{2}] \cdot V(r) ~, [t + \frac{D \cdot (1 - d)}{2}] \cdot V(r)$ & 
    $t \cdot (1 - s) \cdot V(0) ~, \{t + [D \cdot (1 - d) + t \cdot s]\} \cdot V(0)$ \\\cline{3-4} 
    & Reject &
    $\{t + [D \cdot (1 - d) + t \cdot s]\} \cdot V(0) ~, t \cdot (1 - s) \cdot V(0)$ & 
    $[t + \frac{D \cdot (1 - d)}{2}] \cdot V(0) ~, [t + \frac{D \cdot (1 - d)}{2}] \cdot V(0)$ \\\cline{3-4} 
  \end{tabular}
  \end{adjustwidth}
  \caption{\label{tab:2-voter-payoffs} Payoff matrix for 2 voters with equal weight}
\end{table}

\subsection{Analysis of Nash Equilibrium for Voters}
Assuming that a challenge has already been proposed, we analyze the voters' strategy profiles to identify the Nash Equilibrium:

\begin{itemize}
\item \textbf{(Accept, Accept)}: Due to the symmetry of the payoffs for each player if defecting, this strategy profile is a Nash Equilibrium iff:
\begin{align*}
	[t + \frac{D \cdot (1 - d)}{2}] \cdot V(r) ~~&\geq~~ \{t + [D \cdot (1 - d) + t \cdot s]\} \cdot V(0) \\
    \Longleftrightarrow \frac{V(r)}{V(0)} ~~&\geq~~ \frac{t + D \cdot (1 - d) + t \cdot s}{t + \frac{D \cdot (1 - d)}{2}} \\
    \Longleftrightarrow \frac{V(r)}{V(0)} ~~&\geq~~ \frac{t + \frac{D \cdot (1 - d)}{2} + \frac{D \cdot (1 - d)}{2} + t \cdot s}{t + \frac{D \cdot (1 - d)}{2}} \\
    \Longleftrightarrow \frac{V(r)}{V(0)} ~~&\geq~~ 1 + \frac{\frac{D \cdot (1 - d)}{2} + t \cdot s}{t + \frac{D \cdot (1 - d)}{2}}
\end{align*}

Let us denote $\frac{\frac{D \cdot (1 - d)}{2} + t \cdot s}{t + \frac{D \cdot (1 - d)}{2}}$ by $\mathcal{E}$.

Therefore, (Accept, Accept) is a Nash Equilibrium iff:
$$
	\frac{V(r)}{V(0)} ~~\geq~~ 1 + \mathcal{E}
$$

\item \textbf{(Accept, Reject) and (Reject, Accept)}: It is clear that for the voter playing Accept, switching to Reject is a better choice because:
$$
	t \cdot (1 - s) \cdot V(0) ~<~ t \cdot V(0) ~<~ [t + \frac{D \cdot (1 - d)}{2}] \cdot V(0)
$$

Therefore, both these strategy profiles are not a Nash Equilibrium.

\item \textbf{(Reject, Reject)}: Due to the symmetry of the payoffs for each player if defecting, this strategy profile is a Nash Equilibrium iff:
\begin{align*}
	[t + \frac{D \cdot (1 - d)}{2}] \cdot V(0) ~~&\geq~~ t \cdot (1 - s) \cdot V(0) \\
	\Longleftrightarrow t + \frac{D \cdot (1 - d)}{2} ~~&\geq~~ t \cdot (1 - s) \\
	\Longleftrightarrow t + \frac{D \cdot (1 - d)}{2} ~~&\geq~~ t ~~\geq~~ t \cdot (1 - s)
\end{align*}
which is trivially true.

Therefore, (Reject, Reject) is always a Nash Equilibrium.

\end{itemize}




\subsection{Outcome Analysis}

For simplicity, let us denote $\frac{V(r)}{V(0)}$ by $\gamma$. Depending on the values of $\delta'$ and $\mathcal{E}$, our analysis above implies that we may have either of these 2 scenarios:
\begin{enumerate}
    \begin{figure}[h]
    \includegraphics[width=\textwidth]{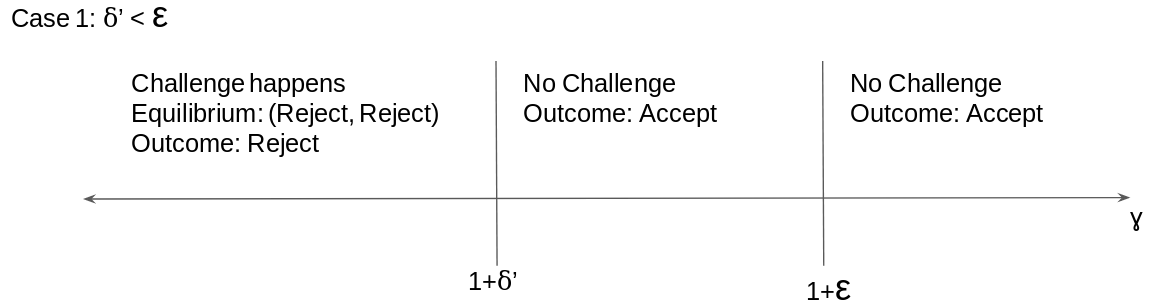}
    \caption{\label{fig:1c-2v-case1}Outcome for $\delta' \leq \mathcal{E}$ in $1$-challenger, $2$-voter system}
    \end{figure}
\item {$\delta' \leq \mathcal{E}$} (Refer Figure \ref{fig:1c-2v-case1}):
	\begin{itemize}
	\item {$\gamma \leq 1 + \delta'$}: Challenge happens and (Reject, Reject) is an equilibrium. Outcome is rejection of candidate.
    \item {$1 + \delta' < \gamma < 1 + \mathcal{E}$}: Challenge does not happen and (Reject, Reject) is an equilibrium. Outcome is acceptance of candidate.
    \item {$1 + \mathcal{E} \leq \gamma$}: Challenge does not happen, and (Reject, Reject) and (Accept, Accept) are both an equilibrium. Outcome is acceptance of candidate.
	\end{itemize}

    \begin{figure}[h]
    \includegraphics[width=\textwidth]{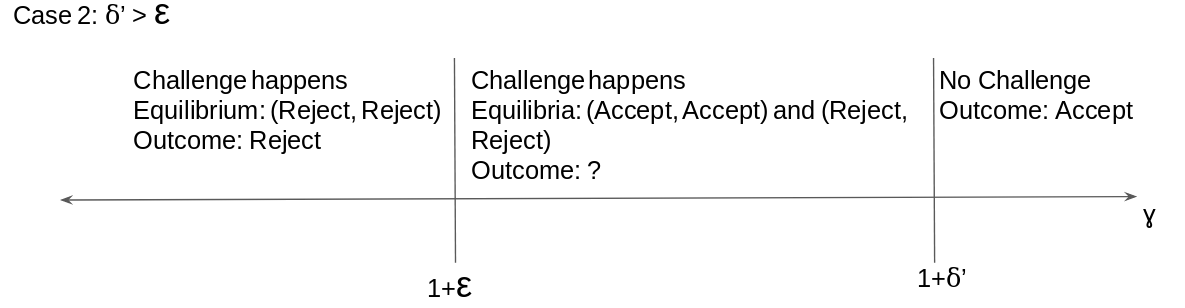}
    \caption{\label{fig:1c-2v-case2}Outcome for $\delta' > \mathcal{E}$ in $1$-challenger, $2$-voter system}
    \end{figure}
\item {$\delta' > \mathcal{E}$} (Refer Figure \ref{fig:1c-2v-case2}):
	\begin{itemize}
	\item {$\gamma \leq 1 + \mathcal{E}$}: Challenge happens and (Reject, Reject) is an equilibrium. Outcome is rejection of candidate.
    \item {$1 + \mathcal{E} < \gamma < 1 + \delta'$}: Challenge happens, and (Reject, Reject) and (Accept, Accept) are both an equilibrium. The outcome is the result of an equilibrium selection problem.
    \item {$1 + \delta' \leq \gamma$}: Challenge does not happen, and (Reject, Reject) and (Accept, Accept) are both an equilibrium. Outcome is acceptance of candidate.
	\end{itemize}

\end{enumerate}

\section{Nash Equilibria and outcomes in a $1$-challenger, $n$-voters TCR system\label{sec:nvoter}}

We now extend our analysis to a more general setting, in which there is still one challenger, and there are $n > 2$ voters.

\begin{prop} So long as a challenge happens, a strategy profile where:
\begin{enumerate}
\item there is at least one vote for accept and one vote for reject, is never a Nash Equilibrium.
\item all voters vote for accept or all voters vote for reject is always a Nash Equilibrium.
\end{enumerate}
\end{prop}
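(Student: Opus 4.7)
The plan is to analyze a unilateral deviation by a single voter in each of the strategy profiles mentioned and check whether the deviator's payoff strictly improves. The payoffs are already written down in Section~\ref{sec:games}, so the argument reduces to manipulating those closed-form expressions; the only non-trivial ingredient is determining when a unilateral deviation changes the outcome of the vote.

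For part 1, let $\sigma$ be a profile in which both Accept and Reject receive at least one vote, let $o$ be its outcome, and pick any voter $i$ in the losing bloc under $\sigma$. The first step is to note that if $i$ switches to the winning bloc the outcome cannot change: moving weight from the losing side to the side already at or past quorum only widens the margin. Given the outcome is fixed, $i$'s pre-switch payoff is $t(1-s)V$, where $V \in \{V(0), V(r)\}$ matches $o$, while the post-switch payoff has the form $\{t + [D(1-d) + (\cdot)s]\, t/(\cdot)\}V$ with both bracketed quantities nonnegative. Since this strictly exceeds $t > t(1-s)$, the deviation is always profitable, so $\sigma$ fails the one-shot deviation test and cannot be a Nash equilibrium.

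For part 2 I would treat the two unanimous profiles in parallel under the symmetric assumption of Section~\ref{sec:twovoter} (all voters and the challenger hold $t$ tokens, and $Q = 0.5$). In the ``all Accept'' profile, $T_A = nt$, $T_C = t$, and $W_A = n/(n+1) > 1/2 = Q$, so the outcome is acceptance and each voter receives $\{t + D(1-d)/n\}V(r)$, the $s$ term vanishing because $T_R - T_C = 0$. A single deviation to Reject gives $W_A = (n-1)/(n+1)$, which for $n > 2$ still satisfies $W_A \geq 1/2$, so the outcome remains Accept and the deviator's payoff drops to $t(1-s)V(r)$, strictly below $\{t + D(1-d)/n\}V(r)$. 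The argument for ``all Reject'' is symmetric: a single Accept defection produces $W_A = 1/(n+1) < 1/2$, the outcome remains Reject, and the defector's payoff falls from $\{t + D(1-d)/n\}V(0)$ to $t(1-s)V(0)$.

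The main obstacle I foresee lies in part 2 and the phrase ``always a Nash Equilibrium.'' The proposition as literally stated does not fix the token distribution, and without the symmetric-weight assumption a voter owning more than $1 - Q$ fraction of the total voting weight could flip a unanimous outcome, after which the relevant inequality mixes $V(r)$ and $V(0)$ and need not favor the unanimous profile. My plan is therefore to carry over the equal-weight, $Q = 0.5$ assumption of the two-voter section to the $n$-voter game explicitly, which is exactly what forces the hypothesis ``$n > 2$'' to do its work in both pivotality calculations, and to note that the statement is to be read under that scope.
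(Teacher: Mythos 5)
Your proposal follows essentially the same route as the paper: a one-shot deviation analysis using the Section~\ref{sec:games} payoffs, showing that a losing-bloc voter always gains by joining the winning bloc (part 1) and that a winning-bloc voter in a unanimous profile always loses by defecting (part 2). The one place you go beyond the paper is the pivotality issue you flag, and you are right to flag it: the paper's proof simply writes the post-deviation payoff as $t(1-s)V(r)$ (resp.\ $t(1-s)V(0)$) under the heading ``Candidate Accepted'' (resp.\ ``Rejected''), which silently assumes the deviation does not flip the outcome; this is automatic in part 1 (weight moving to the already-winning side only widens the margin) but is a genuine unstated hypothesis in part 2, where a sufficiently heavy voter, or a quorum close to the unanimous margin, could be pivotal. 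Your explicit restriction to equal token holdings with $Q=0.5$, under which $(n-1)/(n+1)\geq 1/2$ exactly when $n>2$, is a clean way to make the ``always a Nash Equilibrium'' claim literally true, and it also explains why the hypothesis $n>2$ is doing real work --- something the paper's own argument leaves implicit.
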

\begin{proof}
    Let us consider the case where $x$ tokens vote for accept, and $T-x$ tokens vote for reject. Consider the case of an individual token holder holding $t$ tokens. The voter can vote in 2 ways:
\begin{enumerate}
\item \textbf{Voting for Accept}:
  \begin{itemize}
  \item \textbf{Candidate Accepted ($x \geq T \cdot Q$)} - Payoff for voting accept is:
  $$
  \{t + [D \cdot (1-d) + (T - x - T_C) \cdot s] \cdot \frac{t}{x}\} \cdot V(r)
  $$
  In case the voter defects to vote reject, then payoff becomes:
  $$
  t \cdot (1 - s) \cdot V(r)
  $$
  \item \textbf{Candidate Rejected ($x < T \cdot Q$)} - Payoff for voting accept is:
  $$
  t \cdot (1 - s) \cdot V(0)
  $$
  In case the voter defects to vote reject, then payoff becomes:
  $$
  \{t + [D \cdot (1-d) + ( x - t ) \cdot s] \cdot \frac{t}{T - ( x - t ) - T_C}\} \cdot V(0)
  $$
  \end{itemize}
\item \textbf{Voting for Reject}:
  \begin{itemize}
  \item \textbf{Candidate Accepted ($x \geq T \cdot Q$)} - Payoff for voting reject is:
  $$
  t \cdot (1 - s) \cdot V(r)
  $$
  In case the voter defects to vote accept, then payoff becomes:
  $$
  \{t + [D \cdot (1-d) + (T - ( x + t ) - T_C) \cdot s] \cdot \frac{t}{x + t}\} \cdot V(r)
  $$
  \item \textbf{Candidate Rejected ($x < T \cdot Q$)} - Payoff for voting reject is:
  $$
  \{t + [D \cdot (1-d) + x \cdot s] \cdot \frac{t}{T - x - T_C}\} \cdot V(0)
  $$
  In case the voter defects to vote accept, then payoff becomes:
  $$
  t \cdot (1 - s) \cdot V(0)
  $$
  \end{itemize}
\end{enumerate}

\noindent To prove part 1 of the proposition, we analyze the following two cases:
\begin{itemize}
\item \textbf{Voter votes Accept, Candidate Rejected} - Voter prefers to defect to voting reject since:
$$
t \cdot (1 - s) \cdot V(0) ~~<~~ t \cdot V(0) ~~<~~ \{t + [D \cdot (1-d) + ( x - t ) \cdot s] \cdot \frac{t}{T - ( x - t ) - T_C}\} \cdot V(0)
$$

\item \textbf{Voter votes Reject, Candidate Accepted} - Voter prefers to defect to voting accept since:
$$
t \cdot (1 - s) \cdot V(r) ~~<~~ t \cdot V(r) ~~<~~ \{t + [D \cdot (1-d) + (T - ( x + t ) - T_C) \cdot s] \cdot \frac{t}{x + t}\} \cdot V(r)
$$
\end{itemize}

One of the above two conditions must happen in part 1 since the vote is not unanimous, there is at least one voter that has an incentive to defect and therefore the given non-unanimous strategy profile is not an equilibrium.
\\
\noindent To prove part 2 of the proposition, we analyze the following two cases:
\begin{itemize}
\item \textbf{Voter votes Accept, Candidate Accepted} - Voter prefers to stick with voting accept since:
$$
\{t + [D \cdot (1-d) + (T - x - T_C) \cdot s] \cdot \frac{t}{x}\} \cdot V(r) ~~>~~ t \cdot V(r) ~~>~~ t \cdot (1 - s) \cdot V(r)
$$

\item \textbf{Voter votes Reject, Candidate Rejected} - Voter prefers to stick with voting reject since:
$$
\{t + [D \cdot (1-d) + x \cdot s] \cdot \frac{t}{T - x - T_C}\} \cdot V(0) ~~>~~ t \cdot V(0) ~~>~~ t \cdot (1 - s) \cdot V(0)
$$
\end{itemize}

In part 2 since the profiles in question consists of all voters voting unanimously one of the above two conditions holds, and in both cases all voters have no incentive to defect unilaterally, thus proving that a unanimous strategy profile is always an equilibrium regardless of whether it consists of all voters voting for accept or reject.   \end{proof}

To summarize, we have that all voters accepting and all voters rejecting are always and the only two equilibria in the $1$-challenger, $n$-voter TCR system so long as a challenge occurs.

\subsection{Outcome Analysis}
Based on the above, we have the following outcomes for various values of $\gamma$ in this case of $n>2$ voters:
\begin{enumerate}
\item \textbf{$\gamma < 1 + \delta'$}: The challenger challenges, and both all voters accepting and all voters rejecting are Nash Equilibria. The outcome is the result of the equilibrium selection problem.
\item \textbf{$\gamma \geq 1 + \delta'$}: The challenger does not challenge (though had there been a challenge, both all voters accepting and all voters rejecting would have been Nash Equilibria). The outcome is that the candidate is accepted.
\end{enumerate}

\section{Ordering and Composition}

As seen in our analyses above, because of the possibility of multiple Nash equilibria, it is not entirely predictable which outcome will prevail in some cases. In particular, if it is possible  that candidates for whom $V(r) > V(0)$ could be rejected  (i.e. candidates that strictly increase the quality of the TCR), then there is an interesting implication for the composition of TCRs: the order in which candidates are proposed to the list can potentially determine the final composition of the list. In particular, it may be possible that the TCR gets more selective over time and lower or intermediate-quality candidates that arrive later might be less likely to be selected than if they arrived when the TCR had fewer (relatively lower quality) elements. As our model doesn't rule out this possibility, it is worth investigating whether this happens in practice.

\section{Payoff Dominant Equilibrium~\label{sec:payoffdominance}}

In both the 2-voter and $n$-voter cases we have found that there can be situations where there are two equilibria: all voters vote to reject or all voters vote to accept the candidate. As one way to deal with the problem of equilibrium selection in games where there can be multiple equilibria, economists have considered a refinement called payoff dominant equilibrium, proposed in the work by Harsanyi and Selten~\cite{harsanyi1988general}. A payoff dominant equilibrium is a Pareto optimal equilibrium, i.e. there is no other equilibrium at which all participants are strictly better off. We now consider and analyze this refinement for the TCR voting games, identifying conditions under which there is a single payoff dominant equilibrium.

\noindent\begin{prop} In the $1$-challenger, $2$-voter TCR system, if there is an equilibrium selection problem and $V(r) > V(0)$, then the payoff dominant equilibrium is (Accept, Accept). Also, if $V(r) < V(0)$, then the payoff dominant equilibrium is (Reject, Reject).\end{prop}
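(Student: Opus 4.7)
The plan is to compare the payoffs at the two candidate Nash equilibria, (Accept, Accept) and (Reject, Reject), directly from the payoff matrix in Table~\ref{tab:2-voter-payoffs}, and observe that one of them Pareto dominates the other exactly according to the sign of $V(r) - V(0)$. The hypothesis of an ``equilibrium selection problem'' already guarantees that both profiles are Nash equilibria (recall that (Reject, Reject) is always a Nash equilibrium, and from the earlier analysis (Accept, Accept) is a Nash equilibrium iff $\gamma \geq 1 + \mathcal{E}$), so the remaining task is purely a Pareto comparison between these two equilibria.

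First I would read off the on-diagonal entries of the payoff matrix: at (Accept, Accept) each voter earns $\bigl[t + \tfrac{D\cdot(1-d)}{2}\bigr]\cdot V(r)$, while at (Reject, Reject) each voter earns $\bigl[t + \tfrac{D\cdot(1-d)}{2}\bigr]\cdot V(0)$. The two expressions share the identical positive prefactor $t + \tfrac{D\cdot(1-d)}{2} > 0$, so the comparison reduces entirely to comparing $V(r)$ with $V(0)$. In particular, if $V(r) > V(0)$, both voters strictly prefer (Accept, Accept) to (Reject, Reject), making (Accept, Accept) Pareto dominate (Reject, Reject); symmetrically, if $V(r) < V(0)$, both voters strictly prefer (Reject, Reject), so (Reject, Reject) Pareto dominates (Accept, Accept).

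Next I would invoke the definition of payoff dominant equilibrium from Harsanyi and Selten: among the set of Nash equilibria, a payoff dominant one is a Pareto optimal equilibrium, i.e.\ one for which no other equilibrium strictly improves every player. Since under the stated hypotheses the only two Nash equilibria are (Accept, Accept) and (Reject, Reject), and one of them strictly Pareto dominates the other, the dominant one is trivially Pareto optimal within the equilibrium set and the other is not. This yields the claim in both directions.

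I do not expect any genuine obstacle here; the argument is a short Pareto comparison built on the already established equilibrium characterization. The only subtle point worth flagging in the write-up is that payoff dominance is a comparison across equilibria (not across arbitrary strategy profiles), so it is important to rely on the prior proposition that restricts the equilibrium set to the two symmetric profiles, rather than comparing against, say, the off-diagonal profiles in the payoff matrix.
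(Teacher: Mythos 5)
Your argument is correct and follows essentially the same route as the paper's own proof: compare the diagonal payoffs $\bigl[t + \tfrac{D\cdot(1-d)}{2}\bigr]\cdot V(r)$ versus $\bigl[t + \tfrac{D\cdot(1-d)}{2}\bigr]\cdot V(0)$ and reduce the Pareto comparison to the sign of $V(r)-V(0)$. Your explicit remark that the comparison is only across the two equilibria (guaranteed by the equilibrium selection hypothesis and the earlier Nash analysis) is a small clarification the paper leaves implicit, but the substance is identical.
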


\noindent \begin{proof} A payoff dominant equilibrium is one where there is no other equilibrium where any player stands to gain anything without another player losing something. We analyze by taking an equilibrium in consideration and checking whether any player would lose something if some other equilibrium was played.

When both players play Reject, the payoffs to each player is $\{t + \frac{D \cdot (1 - d)}{2}\} \cdot V(0)$. If both players play Accept, the the payoff to each player is $\{t + \frac{D \cdot (1 - d)}{2}\} \cdot V(r)$. From these payoffs, it is clear that:
$\{t + \frac{D \cdot (1 - d)}{2}\} \cdot V(r) > \{t + \frac{D \cdot (1 - d)}{2}\} \cdot V(0)$ iff $V(r) > V(0)$, i.e., (Accept, Accept) is the payoff dominant equilibrium iff $V(r) > V(0)$.

Similarly, (Reject, Reject) is the payoff dominant equilibrium iff $V(r) < V(0)$.
\end{proof}


We also get a similar result for the $n$-voter case.

\begin{prop} In the $1$-challenger, $n$-voter TCR system, if there is an equilibrium selection problem and $V(r) > V(0)$, then the payoff dominant equilibrium is all voters voting for accept. Also, if $V(r) < V(0)$, then the payoff dominant equilibrium is all voters voting for reject. \end{prop}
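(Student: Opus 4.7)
The plan is to reduce the question to a direct comparison of voter payoffs at only two strategy profiles, by invoking the preceding proposition. An equilibrium selection problem can only arise when a challenge has been made, and in that case the preceding proposition tells us the only Nash equilibria are the two unanimous profiles: either all voters vote accept, or all voters vote reject. Hence payoff dominance reduces to comparing each voter's payoff at these two profiles (mirroring the structure used for the $2$-voter proposition, where only voter payoffs were compared).

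First I would compute each voter's payoff at the unanimous-accept profile. Substituting $x = T - T_C$ (every non-challenger token voting accept) into the general formula from the preceding analysis, the losing-bloc slash contribution $(T - x - T_C)\cdot s$ collapses to $0$, so each voter with $t$ tokens receives
\[
\left\{t + \frac{D\cdot(1-d)\cdot t}{T - T_C}\right\}\cdot V(r).
\]
Symmetrically, at the unanimous-reject profile, substituting $x = 0$ kills the $x\cdot s$ slash term, and each voter with $t$ tokens receives
\[
\left\{t + \frac{D\cdot(1-d)\cdot t}{T - T_C}\right\}\cdot V(0).
\]

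Next I would observe that the two expressions share the same strictly positive scalar coefficient, differing only through the factor $V(r)$ versus $V(0)$. Hence every voter simultaneously strictly prefers the same equilibrium: all-accept when $V(r) > V(0)$, and all-reject when $V(r) < V(0)$. Switching to the other equilibrium then strictly decreases every voter's payoff, so the preferred profile is Pareto optimal among the two equilibria and is therefore the unique payoff dominant one.

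The only real obstacle is token bookkeeping: one must keep $T_C$ straight --- the challenger's tokens count toward the reject bloc but are excluded from the winning-bloc distribution, so both denominators come out to $T - T_C$ --- and verify that the slash terms genuinely vanish at the unanimous profiles because the losing voter bloc is empty. Once those substitutions are made, the symmetry of the two coefficients is immediate, and the conclusion is a one-line comparison, exactly parallel to the proof in the $2$-voter case.
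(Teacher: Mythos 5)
Your proposal is correct and follows essentially the same route as the paper: substitute $x = T - T_C$ and $x = 0$ into the general payoff expressions, note that the slash terms vanish and both coefficients equal $t + D\cdot(1-d)\cdot t/(T - T_C)$, and conclude by comparing $V(r)$ with $V(0)$. The only (minor) addition you make is explicitly invoking the preceding proposition to justify that these two unanimous profiles are the only candidates for payoff dominance, which the paper leaves implicit.
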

\begin{proof} Let us look at the payoffs for a voter with $t$ tokens in both cases:
\begin{itemize}
    \item \textbf{All play accept}: If all voters play accept, then we have number of tokens voting for accept $x = T - T_C$, i.e., all except the challenger voted for Accept. Then the payoff becomes:
    \begin{align*}
    \{t + [D \cdot (1-d) + (T - x - T_C) \cdot s] \cdot \frac{t}{x}\} \cdot V(r) &= \{t + [D \cdot (1-d) + (T - (T - T_C) - T_C) \cdot s] \cdot \frac{t}{(T-T_C)}\} \cdot V(r) \\
    &= \{t + [D \cdot (1-d)] \cdot \frac{t}{(T-T_C)}\} \cdot V(r)
    \end{align*}
    \item \textbf{All play reject}: If all voters play reject, then we have number of tokens voting for accept $x = 0$. Then the payoff becomes:
    \begin{align*}
    \{t + [D \cdot (1-d) + x \cdot s] \cdot \frac{t}{T - x - T_C}\} \cdot V(0) &= \{t + [D \cdot (1-d)] \cdot \frac{t}{T - T_C}\} \cdot V(0) \\
    \end{align*}
\end{itemize}

A payoff dominant equilibrium is one where there is no other equilibrium where any player stands to gain anything without another player losing something. We analyze by taking an equilibrium in consideration and checking whether any player would lose something if some other equilibrium was played.

Therefore, all playing accept is a payoff dominant equilibrium iff:
\begin{align*}
&\{t + [D \cdot (1-d)] \cdot \frac{t}{(T-T_C)}\} \cdot V(r) > \{t + [D \cdot (1-d)] \cdot \frac{t}{T - T_C}\} \cdot V(0) \\
&\Longleftrightarrow V(r) > V(0)
\end{align*}

Similarly, all playing reject is a payoff dominant equilibrium iff $V(r) < V(0)$.  \end{proof}

These findings about the payoff dominant equilibria are of theoretical interest, and  suggest on intuitive grounds that these might be the preferred equilibria. However, we should emphasize that there is no guarantee that rational players will always select this equilibrium, that the other equilibrium cannot arise in a voting game. There are other solution concepts for equilibrium selection that may be worth exploring, such as the concept of a risk dominant equilibrium. Harsanyi, who had first proposed payoff dominant equilibria in a work with Selten in 1988, proposed a theory that is based on risk dominance rather than payoff dominance in a later work published in 1995~\cite{harsanyi1995new}.  And ultimately, the problem of equilibrium selection might be better studied empirically rather than on a purely theoretical basis.

\section{Conclusions\label{sec:conclusions}}

In this paper, we have presented a preliminary mathematical analysis of Token Curated Registry by modeling it game theoretically. In particular, we identified the equilibria associated with different voting patterns as a function of parameters such as the
quality of the candidate being voted upon, and also considered conditions under which a challenge may/may not occur. We have presented the analysis for two sets of cases - one involving just two voters, and one involving any number $n$ of voters.

In general, the following are our findings and conclusions:
\begin{itemize}
\item If the candidate is sufficiently good then in all cases, the outcome is to accept.
\item Likewise if the candidate is sufficiently bad, then we assume (axiomatically) that the outcome is to reject.
\item In between, depending on the TCR parameters, our analysis does not rule out the possibility that the outcome may favor the rejection of a candidate (because of the existence of multiple equilibria). If such a reject outcome were to hold even for candidates with whom the value of the TCR  is strictly better than the quality of the TCR without them, however, it raises a very intriguing possibility -- that the composition of a TCR may depend not only upon the total set of candidates presented to it, but rather the \emph{order} in which the candidates are presented. This surprising possibility is worth further study. 
\end{itemize}

Our work has unearthed some technical challenges with respect to modeling TCR. For one, we have not been able to derive numerically a lowest upper bound on a sufficient condition for a challenger to challenge a candidate, instead assuming it exists on a non-constructive, axiomatic basis. The problem of equilibrium selection is difficult to settle on a purely theoretical basis. While we have investigated the refinement called payoff dominant equilibrium, it's not clear if it will be preferred in practice. In the future, we would like to explore other refinements such as risk dominance, and also conduct experiments with human players or analyze data from real-world deployments of TCR to understand how these games play out in reality.

We have focused on a single-shot game model of the vote for a single candidate.   Another direction for future work is to explicitly explore the dynamics of evolution of a TCR through a suitably-defined repeated game model, although there may be complications induced by the fact that the participants in a TCR themselves might change over time.

In future work, we are also interested in relaxing a key assumption of our model - that the quality of the candidates is deterministic and known to all. In reality, it may be hard for token holders to assess the quality of candidates and that assessment may vary from individual to individual. It may be possible to model this uncertainty using a probabilistic distribution to understand the implications.

\bibliographystyle{plain}
\bibliography{references.bib}

\end{document}